\newcommand{\mt}[1]{\todo[inline,color=blue!20]{MT: #1}}
\newtheorem{theorem}{Theorem}
\theoremstyle{remark}
\newcommand{\pfmax}{\bar{f}}
\newcommand{\pgref}{\hat{g}}  
\newcommand{\pdref}{\hat{d}}  
\newcommand{\va}{\boldsymbol{\theta}}
\newcommand{\pg}{\mathbf{g}}  
\newcommand{\pd}{\mathbf{d}}  
\newcommand{\pf}{\mathbf{f}}  
\newcommand{\cf}{\mathbf{f}^\star}  
\newcommand{\vbranch}{\mathbf{v}}  
\newcommand{\vfsol}{\mathbf{\Tilde{v}}} 
\newcommand{\vsol}{\mathbf{\bar{v}}} 
\newcommand{\LL}{\text{LL}}  
\newcommand{\ol}{{ol}} 
\newcommand{\vpi}{\boldsymbol{\pi}}
\newcommand{\org}{o}
\newcommand{\dst}{d}
\newcommand{\edges}{\mathcal{E}}
\newcommand{\buses}{\mathcal{V}}
\newcommand{\contingencies}{\mathcal{C}}
\newcommand{\allcases}{\bar{\mathcal{C}}} 
\newcommand{\contingenciesSa}[1]{\textbf{C}_{#1}} 
\newcommand{\crv}{\mathbf{C}^{RV}} 
\newcommand{\GC}{\mathbf{K}} 
\newcommand{\gc}{\mathcal{K}} 
\newcommand{\PBO}{SBS} 
\newcommand{\FS}{FS} 
\newcommand{\MB}{MB} 
\newcommand{\nbhop}{n_{hops}}
\newcommand{\nbhopO}{nh_0}
\newcommand{\nbhopmax}{nh_{max}}
\newcommand{\SA}{\textbf{SecurityAnalysis}}
\newcommand{\MostConstraining}{\textbf{MostConstraining}}
\newcommand{\ReduceViol}{\textbf{ReduceViolations}}
\newcommand{\RemoveUO}{\textbf{RemoveUnnecessaryOpenings}}
\newcommand{\hop}{\textbf{Hop}}
\newcommand{\VBf}[2]{VB_{#1}^{#2}}
\newcommand{\originAlg}{JT25}
\newcommand{\completeAlg}{Reformulated}
\begin{document}

\author{
\IEEEauthorblockN{Beno\^{i}t Jeanson\IEEEauthorrefmark{1}\IEEEauthorrefmark{2},  Mathieu Tanneau\IEEEauthorrefmark{3}, Simon~H.~Tindemans\IEEEauthorrefmark{2}}
\IEEEauthorblockA{\IEEEauthorrefmark{1}Dept. of Electrical Sustainable Energy, Delft University of Technology, Delft, The Netherlands 
\\
\IEEEauthorrefmark{2}\textit{CRESYM}, Brussels, Belgium
\\
\IEEEauthorrefmark{3}
Georgia Institute of Technology, Atlanta, GA, United States
}}

\title{Scalable Iterative Algorithm for Solving Optimal Transmission Switching with De-energization
}

\maketitle

\begin{abstract}
    Transmission System Operators routinely use transmission switching as a tool to manage congestion and ensure system security.
    Motivated by sub-transmission operations at RTE, this paper considers the Optimal Transmission Switching with De-energization (OTSD), which captures potential loss of connectivity (and therefore localized blackout) following loss of transmission elements.
    While directly relevant to real-life operations, this problem has received very little attention in the literature.
    The paper proposes a new mixed-integer linear programming formulation for OTSD that represents post-contingency loss of connectivity without requiring additional binary variables.
    This new formulation provides the foundation for a fast, iterative heuristic algorithm.
    Computational experiments confirms that state-of-the-art optimization solvers struggle to solve the extensive formulation of OTSD, often failing to find even trivial solutions within reasonable time.
    In contrast, numerical results demonstrate the efficiency of the proposed heuristic, which finds high-quality feasible solutions 100-1000x faster than using Gurobi.
    
\end{abstract}

{\it Index terms}-- Optimal Transmission Switching; Optimization; Security analysis
\section{Introduction}
\label{sec:intro}

    Transmission switching is routinely used by Transmission System Operators (TSOs) to manage grid congestion, reduce operational costs, and ensure voltage stability and {`N-1'} security \cite{Hedman2011_ReviewOTS}.
    This operational tool is particularly valuable under high renewable penetration \cite{littleOptimalTransmissionTopology2021}.
    Indeed, unlike redispatching and load shedding, topology actions -- such as branch opening and bus splitting -- are generally no/low-cost actions.
    Nevertheless, identifying the best switching pattern, referred to as the \emph{Optimal Transmission Switching} (OTS) problem, is highly combinatorial and generally intractable to solve exactly.
    Therefore, system operators often spend a considerable amount of time manually designing good operational switching patterns, which is both time-consuming and potentially sub-optimal.

    This paper focuses on the OTS problem arising in RTE -- the French TSO -- within its subtransmission network, operating at 63 kV and 90 kV.
    In these areas, power lines may be located above roads or buildings, hence, the consequences of exceeding thermal limits on lines are considered too risky for assets and/or persons. 
    This results in the following operational rule: line overloads are forbidden both in the base (`N') case and after a contingency (`N-1').
    As a consequence, in the event of a contingency, it may be preferable to de-energize  part of the grid, resulting in a local blackout, rather than violate thermal limits and risk damage to life and/or assets.
    Nevertheless, despite its relevance for real-life operations, this setting has received little attention in the literature.

    \subsection{Related Works}
    \label{sec:intro:related_works}

        The following subsection reviews prior works addressing OTS and its security-constrained variants.

        Following the seminal work of \cite{bacherNetworkTopologyOptimization1986}, the OTS problem has received growing attention, with several formulations and algorithmic approaches proposed over the years \cite{Hedman2011_ReviewOTS}.
        These techniques range from metaheuristics to mixed-integer programming (MIP) approaches and, more recently, reinforcement learning \cite{sarOptimizingPowerGrid2025}.
        \mt{Done re-writing until here, below is copy-pasted}
        \mt{@BJ: update this section by including additional references}
        However, adoption of OTS by TSOs has been slow. 
        This may stem from the counterintuitive idea that fewer connected branches can alleviate congestion, exemplifying Braess’ paradox. 
        This phenomenon is further examined in \cite{floresAlternativeMathematicalModels2021a}.
        
        The OTS problem is rarely addressed alone, and is often formulated combined with economic dispatch (ED) and/or unit commitment (UC).
        Indeed, the first MIP approach to the OTS problem already included a basic ED \cite{fisherOptimalTransmissionSwitching2008}.
        It is the foundation of many further works:  the authors of \cite{hedmanOptimalTransmissionSwitching2009} developed a \emph{security-constrained} OTS with ED, soon complemented by generation UC \cite{hedmanCoOptimizationGenerationUnit2010}.
        The algorithm in \cite{hanOptimalTransmissionSwitching2023} takes up the challenge of combining in a three-stage problem, the branch-only opening OTS problem with a linearized AC power flow, the `N-1' rule and uncertainties on RES using a column-and-constraint generation algorithm and Dantzig-Wolfe approaches.
        The recent review on OTS optimization \cite{numanRoleOptimalTransmission2023} focuses solely on branch-switching methods, yet shows that even this simplified form remains a problem of active research and  is increasingly discussed alongside other developments pursued by TSOs, such as Dynamic Thermal Rating and Energy Storage.

        Even in its most basic form, which consists of assessing the grid situation without contingency and under the DC-approximation that drastically simplifies the problem, the problem is already NP-hard as demonstrated in \cite{lehmannComplexityDCSwitchingProblems2014}. 
        The \emph{security-constrained} OTS problem is even more challenging, but the additional complexity is usually considered unavoidable, as the `N-1' rule is a key policy for operations. 
        This significantly increases the problem size, because physical variables -- typically the flows on the branches -- must be assessed for the base case as well as for each post-contingency case.
        This scaling effectively prohibits the use of AC power flow equations inside the problem formulation, so that practical implementations rely on the DC power flow approximation. 
        This has significant limitations on aspects such as voltage, short-circuits, and stability, but is a common first simplification step with the perspective of scaling up.
        Indeed, \cite{liSolvingOptimalTransmission2023a} confirmed the validity of this approach specifically for the OTS problem, when limiting the number of branch openings and picking up many sub-optimal solutions to be validated with an AC model.

        To handle the size increase induced by the `N-1' cases, \cite{heidarifarOptimalTransmissionReconfiguration2014} applies a Benders decomposition to solve the security-constrained OTS problem with OPF combining line switching and bus splitting. However, the experiments are limited to the IEEE 14-bus system.

        This paper builds upon the work in \cite{jeanson2025riskbasedapproachoptimaltransmission}, which introduced a de-energizing operational policy for the OTS MIP model of \cite{fisherOptimalTransmissionSwitching2008}, whose results are also limited to the IEEE 14-bus system.
        The formulation becomes rapidly intractable for larger networks due to the computational cost of assessing the connectivity after tripping to identify the energization state of each bus.
        As such, the problem is far more complex than the usual security-constrained OTS problem as commonly addressed, especially because the on-off behavior applies at two stages: the state of branches and its consequences on the energization state of the buses and the corresponding load and generation injections.
        
        We point out that \cite{dingRobustOptimalTransmission2016} addresses a similar level of complexity and also develops a two-stage algorithm to solve the branch-only opening security-constrained OTS problem with corrective remedial actions.
        It is similar in that injections are adjusted in response to a tripping, but simpler to solve since the `N-1' cases remain connected and no de-energization is to be considered.

        In practice in RTE, sub-transmission systems can be split in areas that rarely exceed 100 buses.
        Thus, an algorithm that solves the OTS problem with De-energization (OTSD) at such local scales is a key component with the perspective to be integrated in a wider optimization program tackling the whole grid such as described in \cite{khanabadiDecentralizedTransmissionLine2018} and \cite{henkaPowerGridSegmentation2022}.

        \mt{end of copy-paste}

        \subsection{Contributions and Outline}
        \label{sec:intro:contributions}

        \mt{This (sub)section was re-written}
    
        Motivated by the operational practices of RTE for sub-transmission systems, this paper improves the scalability of OTSD.
        These advances mark a first step towards solving OTSD exactly on sub-transmission networks with a few hundred buses.
        Its contributions are as follows:
        \begin{enumerate}
            \item It proposes a new MILP formulation for OTSD that does not require binary variables for representing post-contingency loss of connectivity (see Theorems \ref{thm:pi} and \ref{thm:pi_continu}), a significant improvement compared to \cite{jeanson2025riskbasedapproachoptimaltransmission}.
            \item It introduces a fast algorithm for finding feasible solutions by integrating a column-and-constraint algorithm and variable neighborhood local search strategies.
            \item It reports computational experiments on networks with up to 200 buses.
        \end{enumerate}
        
        The rest of the paper is structured as follows.
        Section \ref{sec:problem} presents a new formulation for OTSD.
        Section \ref{sec:heuristic} describes the proposed heuristic algorithm.
        Numerical results are presented in Section \ref{sec:results} and Section \ref{sec:conclusion} concludes the paper.

\section{Problem Formulation}
\label{sec:problem}

This section presents the OTSD problem as an MILP.
The proposed formulation builds on previous work \cite{jeanson2025riskbasedapproachoptimaltransmission}, but uses fewer variables to represent N-1 de-energization.

\subsection{Definitions}
\label{sec:formulation:notations}
    
    Consider a power grid represented as a graph $G = (\buses, \edges)$, where $\buses$ is the set of buses, and $\edges$ is the set of branches (i.e. lines and transformers).
    The generation and load at bus $i \, {\in} \, \buses$ are denoted by $\pgref_{i}$ and $\pdref_{i}$, respectively.
    The susceptance and thermal limit of branch $e \, {\in} \, \edges$ are denoted by $b_{e}$ and $\pfmax_{e}$.
    Note that, in this work, these are not decision variables.
    
    The \emph{origin} and \emph{destination} nodes of branch $e \, {\in} \edges$ are denoted by $\org(e)$ and $\dst(e)$, respectively, defining the sign of flows on branch $e$.
    The set of branches leaving (resp. entering) bus $i$ is denoted by $\edges^{+}_{i}$ (resp. $\edges^{-}_{i}$).
    Given $\buses' \subseteq \buses$, the \emph{cutset} $\gc = [\buses', \buses {\setminus} \buses'] \, {\subseteq} \, \edges$ is the set of branches with exactly one end point in $\buses'$.
    The set of all cutsets that separate bus $i$ from a predefined ``reference" bus $r \, {\in} \, \buses$ is denoted by $\GC_{i} = \{ [\buses', \buses {\setminus} \buses'] \, | \, i \, {\in} \, \buses', r \notin \buses' \}$.
    In this paper, the reference bus is chosen as a bus known to remain energized under any situation.
   
    A \emph{contingency} refers to the simultaneous loss of a set of branches $c \, {\subseteq} \, \edges$.
    The set of energized and de-energized buses under contingency $c$ are denoted by $\buses^{c}$ and $\bar{\buses}^{c}$, respectively.
    Therein, a bus is \emph{energized} if it is still connected to the grid's main component, and \emph{de-energized} otherwise.
    For simplicity, the paper defines the ``main component" as the set of buses that remain connected to the reference bus $r$.
    
    Finally, let $\contingencies \, {\subseteq} \, 2^{\edges}$ denote a set of monitored contingencies, and define $\allcases \, {=} \, \{\emptyset\} \cup \contingencies$ where $c \, {=} \, \emptyset$ denotes the base case (no contingency).
    The probability of occurrence of contingency $c$ is denoted by $p_{c}$.
    While the proposed methodology supports an arbitrary set of monitored contingencies, for ease of presentation, the rest of the paper considers the ``N-1" criterion, i.e., $\contingencies = \{ \{e\} \}_{e \, {\in} \, \edges}$.

\subsection{Base Case Variables and Constraints}
\label{sec:formulation:basecase}

    The paper considers an OTSD formulation where active power generation in the base case is \emph{not} a decision variable.
    This reflects operating practices are RTE, where generation re-dispatch is avoided whenever possible.
    The phase angle at bus $i \, {\in} \, \buses$ in the base case is denoted by $\va_{i}^{\emptyset}$.
    The open/closed status of branch $e \, {\in} \, \edges$ is denoted by $\vbranch_{e}^{\emptyset}$, where
    \begin{align}
        \label{eq:basecase:vbranch}
        \vbranch^{\emptyset}_{e} \, {\in} \, \{0, 1\}
            && \forall e \, {\in} \, \edges,
    \end{align}
    equals $0$ if branch $e$ is open (i.e., disconnected) and $1$ otherwise.
    The power flow on branch $e$ is denoted by $\pf_{e}^{\emptyset}$.
    DC power flow equations are expressed using Ohm's law and Kirchhoff's current law as
    \begin{align}
        \label{eq:basecase:ohm}
         \pf_{e} &= b_{e} \vbranch^{\emptyset}_{e} (\va^{\emptyset}_{\dst(e)} - \va^{\emptyset}_{\org(e)}),
            && \forall e \, {\in} \, \edges,\\
        \label{eq:basecase:kcl}
        \pgref_{i} + \sum_{e \, {\in} \, \edges^{-}_{i}} \pf^{\emptyset}_{e}  &= \pdref_{i} + \sum_{e \, {\in} \, \edges^{+}_{i}} \pf^{\emptyset}_{e},
        && \forall i \, {\in} \, \buses,
    \end{align}
    and thermal limits are enforced as
    \begin{align}
        \label{eq:basecase:thermal}
        |\pf_{e}| & \leq \pfmax_{e},
            && \forall e \, {\in} \, \edges.
    \end{align}

    Similarly to \cite{jeanson2025riskbasedapproachoptimaltransmission}, the paper enforces connectedness in the base case using virtual flows.
    Thereby, let $\delta \, {\in} \, \mathbb{R}^{|\buses|}$ denote the vector of virtual injections, where $\delta_{r} \, {=} \, 1{-}|\buses|$ and $\delta_{i \neq r} \, {=} \, 1$, and let $\cf_{e}$ denote the virtual flow on branch $e \, {\in} \, \edges$.
    Connectedness is then enforced as follows
    \begin{align}
        \label{eq:basecase:shadow_flow}
        \sum_{e \, {\in} \, \edges^{-}_{i}} \cf_{e}  &= \delta_{i} + \sum_{e \, {\in} \, \edges^{+}_{i}} \cf_{e},
        && \forall i \, {\in} \, \buses,\\
        \label{eq:basecase:shadow_bigM}
        |\cf_{e}| & \leq \vbranch^{\emptyset}_{e} |\buses|,
            && \forall e \, {\in} \, \edges.
    \end{align}
    It is easy to see that \eqref{eq:basecase:shadow_flow}-\eqref{eq:basecase:shadow_bigM} is feasible if and only if branch openings $\vbranch^{\emptyset} \, {\in} \, \{0, 1\}^{E}$ are such that the network is connected.

\subsection{Post-contingency Variables and Constraints}
\label{sec:formulation:contingency}

    The post-contingency constraints capture 1) the potential loss of connectivity (de-energization), and 2) post-contingency thermal limits on branches.
    
    The post-contingency energized/de-energized status of bus $i \, {\in} \, \buses$ is denoted by $\vpi^{c}_{i} \, {\in} \, \{0, 1\}$, where $\vpi^{c}_{i} = 1$ (resp. $0$) if bus $i$ is energized (resp. de-energized) under contingency $c \, {\in} \, \contingencies$.
    The post-contingency generation and demand at bus $i$ are denoted by $\pg^{c}_{i}$ and $\pd^{c}_{i}$, respectively, and computed as
    \begin{align}
        \label{eq:contingency:demand}
        \pd_{i}^{c} &= \vpi_{i}^{c} \pdref_{i}
            && \forall i, c \, {\in} \, \buses \times \contingencies,\\
        \label{eq:contingency:generation}
        \pg_{i}^{c} &= \boldsymbol{\sigma}^{c} \vpi_{i}^{c} \pgref_{i}
            && \forall i, c \, {\in} \, \buses \times \contingencies,
    \end{align}
    where variable $\boldsymbol{\sigma}^{c}$ enforces a post-contingency rule that all (energized) generators' dispatches are adjusted proportionally to balance the system.
    Note that constraints \eqref{eq:contingency:demand}, \eqref{eq:contingency:generation} imply $\pd^{c}_{i} = \pg^{c}_{i} = 0$ if bus $i$ is de-energized.
    The loss of load under contingency $c$ is the sum of loads located at de-energized buses
    \begin{align}
        \label{eq:contingency:load_load}
        \LL^{c} &= \sum_{i \, {\in} \, \buses} \pdref_{i} - \pd_{i}^{c}
            && \forall c \, {\in} \, \contingencies.
    \end{align}
    
    The open/close status and power flow on branch $e \, {\in} \, \edges$ under contingency $c \, {\in} \, \contingencies$ are denoted by $\vbranch^{c}_{e}$ and $\pf^{c}_{e}$, respectively.
    Branches' post-contingency status are given by
    \begin{align}
        \label{eq:contingency:vbranch}
        \vbranch^{c} &= \text{Diag}(w^{c}) \times \vbranch^{\emptyset}
            && \forall c \, {\in} \, \contingencies,
    \end{align}
    where $w^{c} \, {\in} \, \{0, 1\}^{|\edges|}$ and $w^{c}_{e} \, {=} \, 0$ if $e \, {\in} \, c$ and $w^{c}_{e} \, {=} \, 1$ otherwise.
    Post-contingency power flows are computed using Ohm's law and Kirchhoff's current law
    \begin{align}
        \label{eq:contingency:ohm}
        \pf_{e}^{c} &= b_{e} \vbranch_{e}^{c} (\va_{\dst(e)}^{c} - \va_{\org(e)}^{c}),
            && \forall e, c \, {\in} \, \edges \times \contingencies,
        \\
        \label{eq:contingency:kcl}
        \pg_{i}^{c} + \sum_{e \, {\in} \, \edges^{-}_{i}} \pf_{e}^{c}  &= \pd_{i}^{c} + \sum_{e \, {\in} \, \edges^{+}_{i}} \pf_{e}^{c}
        && \forall i, c \, {\in} \, \buses \times \contingencies,
    \end{align}
    where $\va^{c}_{i}$ is the phase angle at bus $i \, {\in} \, \buses$ under contingency $c \, {\in} \, \contingencies$.
    Thermal limits read
    \begin{align}
        \label{eq:contingency:thermal}
        | \pf_{e}^{c}| & \leq \pfmax_{e},
            && \forall e, c \, {\in} \, \edges \times \contingencies.
    \end{align}

    An important contribution of the paper is a new formulation for representing the post-contingency energized/de-energized status of each bus using constraints
    \begin{align}
        \label{eq:contingency:reference}
            \vpi_{r}^{c} & =1,
                && \forall c \, {\in} \, \contingencies,\\
        \label{eq:contingency:connected}
            |\vpi_{\org(e)} - \vpi_{\dst(e)}| &\leq 1 - \vbranch_{e}^{c}, 
                && \forall e, c \, {\in} \, \edges \times \contingencies,\\
        \label{eq:contingency:disconnected}
            \vpi_{i}^{c} & \leq \sum_{e \, {\in} \, \gc} \vbranch^{c}_{e}, 
                && \forall i, c, \gc \, {\in} \, \buses \times \contingencies \times \GC_{i}.
    \end{align}
    As proven in Theorem~\ref{thm:pi}, constraints \eqref{eq:contingency:reference}-\eqref{eq:contingency:disconnected} ensure that $\vpi^{c}_{i}$ is set to the correct value.

    \begin{theorem}
        \label{thm:pi}
        Let $c \subseteq \edges$, let $\vbranch \, {\in} \, \{0, 1\}^{\edges}$ and let $\vpi^{c} \, {\in} \, \{0, 1\}^{|\buses|}$ satisfy constraints \eqref{eq:contingency:reference}-\eqref{eq:contingency:disconnected}.
        Then $\forall i, \vpi_{i}^{c} = 1 \Leftrightarrow i \, {\in} \, \buses^{c}$.
    \end{theorem}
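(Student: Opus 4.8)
The plan is to prove the two implications separately, after identifying $\buses^{c}$ with the connected component of the reference bus $r$ in the subgraph of \emph{closed} branches, i.e. those $e$ with $\vbranch_{e}^{c} = 1$. With this identification, ``$i \, {\in} \, \buses^{c}$'' means precisely that $i$ is joined to $r$ by a path of closed branches, which is the combinatorial object the constraints are designed to detect.

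For the implication $i \, {\in} \, \buses^{c} \Rightarrow \vpi_{i}^{c} = 1$, I would propagate energization along such a path. Since $i \, {\in} \, \buses^{c}$, there is a sequence $r = v_{0}, v_{1}, \dots, v_{k} = i$ in which every consecutive pair is joined by a closed branch. Starting from $\vpi_{r}^{c} = 1$, given by \eqref{eq:contingency:reference}, I would invoke \eqref{eq:contingency:connected}: on any closed branch $e$ the right-hand side $1 - \vbranch_{e}^{c}$ equals $0$, forcing $\vpi_{\org(e)}^{c} = \vpi_{\dst(e)}^{c}$. A direct induction along the path then gives $\vpi_{v_{j}}^{c} = 1$ for every $j$, and in particular $\vpi_{i}^{c} = 1$.

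For the converse $\vpi_{i}^{c} = 1 \Rightarrow i \, {\in} \, \buses^{c}$, I would argue by contraposition. Suppose $i \, {\notin} \, \buses^{c}$ and let $\buses'$ be the connected component of $i$ in the closed-branch subgraph. By construction $i \, {\in} \, \buses'$ and $r \, {\notin} \, \buses'$, so the cutset $\gc = [\buses', \buses {\setminus} \buses']$ lies in $\GC_{i}$. Moreover every branch crossing this cut must be open: a closed crossing branch would connect $\buses'$ to a vertex outside it, contradicting maximality of the component. Hence $\sum_{e \, {\in} \, \gc} \vbranch_{e}^{c} = 0$, and \eqref{eq:contingency:disconnected} instantiated at this particular $\gc$ forces $\vpi_{i}^{c} \leq 0$, i.e. $\vpi_{i}^{c} = 0$.

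The induction and the component/cutset bookkeeping are routine; the one point deserving care is verifying that the cutset induced by the component of $i$ genuinely belongs to $\GC_{i}$ \emph{and} carries no closed branch, since this is exactly where the universally quantified family \eqref{eq:contingency:disconnected} is used to certify disconnection from a single witnessing cut. A final remark is that the two implications together settle all buses because $\vpi_{i}^{c} \, {\in} \, \{0,1\}$, so establishing the equivalence for the value $1$ determines the value $0$ as well.
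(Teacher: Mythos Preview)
Your proof is correct and follows essentially the same approach as the paper: propagate energization along a path of closed branches for $\Leftarrow$, and exhibit a witnessing cutset of open branches for the contrapositive of $\Rightarrow$. Your construction of the cut via the connected component of $i$ is slightly more explicit than the paper's, which merely asserts that some subset of the open branches forms a cutset in $\GC_{i}$, but the underlying argument is identical.
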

    \begin{proof}
        Considering $\Leftarrow$, if $i \, {\in} \, \buses^{c}$, there exists a connected path from the reference to $i$.
        Since $\vpi^{c}_{r} = 1$, every bus on that path is energized by recursion using \eqref{eq:contingency:connected}.
        Furthermore, since $i \, {\in} \, \buses^{c}$, any cutset $\gc \, {\in} \, \GC_{i}$ must have at least one active branch, so that the right hand side of \eqref{eq:contingency:disconnected} does not constrain $\vpi_i^c$.\\
        We prove $\Rightarrow$ by negation. We assume $i \notin \buses^{c}$, implying that the set of open branches $\mathcal{B}=\{e\in\mathcal{E} | \vbranch^c_e=0 \}$ is sufficient to separate bus $i$ from the reference. This means there exists a (sub)set $\kappa \subseteq \mathcal{B}$ that is a cutset, i.e.~$\kappa \in \GC_i$, and \eqref{eq:contingency:disconnected} evaluated for $\kappa$ constrains $\vpi_i^c$ to 0. Moreover, for open branches ($\vbranch^c_e=0$), \eqref{eq:contingency:connected} does not constrain $\vpi_i^c$.
    \end{proof}

    An important consequence of Theorem \ref{thm:pi} is that binary requirements on $\vpi$ can be relaxed as they are naturally enforced by \eqref{eq:contingency:reference}-\eqref{eq:contingency:disconnected}, as proven by Theorem \ref{thm:pi_continu}, yielding
    \begin{align}
        \label{eq:contingency:pi}
        \pi^{c}_{i} \in [0, 1], && \forall c, i \in \contingencies \times \buses.
    \end{align}
    This result is a fundamental contribution of the paper.

    \begin{theorem}
        \label{thm:pi_continu}
        Let $c \, {\subseteq} \, \edges$, $\vbranch \, {\in} \, \{0, 1\}^{\edges}$ and let $\vpi^{c} \, {\in} \, [0, 1]^{|\buses|}$ satisfy constraints \eqref{eq:contingency:reference}-\eqref{eq:contingency:disconnected}.
        Then $\vpi^{c} \in \{0,1\}^{|\buses|}$.
    \end{theorem}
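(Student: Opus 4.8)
The plan is to exploit the fact that $\vbranch$ is binary to show that the relaxed constraints \eqref{eq:contingency:reference}--\eqref{eq:contingency:disconnected} already pin each $\vpi_i^c$ to $0$ or $1$, so that no genuinely fractional assignment can survive. I would work on the \emph{reduced graph} $G^c = (\buses, \edges^c)$ whose edge set $\edges^c = \{ e \, {\in} \, \edges \mid \vbranch_e^c = 1 \}$ collects the branches that remain closed under contingency $c$; since $\vbranch \, {\in} \, \{0,1\}^{\edges}$, this set is well defined.

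First I would analyze \eqref{eq:contingency:connected} on a closed branch. When $\vbranch_e^c = 1$ the right-hand side is $0$, so $|\vpi_{\org(e)}^c - \vpi_{\dst(e)}^c| \le 0$ forces $\vpi_{\org(e)}^c = \vpi_{\dst(e)}^c$; when $\vbranch_e^c = 0$ the right-hand side is $1$ and the constraint is vacuous on $[0,1]$. Propagating the equalities along edges of $G^c$ shows that $\vpi^c$ is \emph{constant on every connected component} of $G^c$. Combined with $\vpi_r^c = 1$ from \eqref{eq:contingency:reference}, this immediately yields $\vpi_i^c = 1$ for every bus $i$ lying in the connected component of the reference $r$ in $G^c$.

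Next I would treat a bus $i$ that is \emph{not} connected to $r$ in $G^c$. Let $S$ be the connected component of $G^c$ containing $i$; by assumption $r \notin S$, so the cutset $\kappa = [S, \buses {\setminus} S]$ belongs to $\GC_i$. By maximality of the component $S$, no closed branch crosses its boundary, i.e.\ $\vbranch_e^c = 0$ for every $e \, {\in} \, \kappa$. Instantiating \eqref{eq:contingency:disconnected} at this particular cutset gives $\vpi_i^c \le \sum_{e \, {\in} \, \kappa} \vbranch^c_e = 0$, and since $\vpi_i^c \ge 0$ we conclude $\vpi_i^c = 0$. The two cases together cover all buses and give $\vpi_i^c \, {\in} \, \{0,1\}$, proving the claim and, as a by-product, recovering the characterization of Theorem~\ref{thm:pi}.

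The hard part is the cutset construction in the disconnected case: one must argue that the boundary of a connected component of the closed subgraph is itself a valid element of $\GC_i$ consisting entirely of open branches, so that the single inequality \eqref{eq:contingency:disconnected} evaluated there suffices to force $\vpi_i^c = 0$. Everything else is a direct reading of the constraints, but this step is exactly where binarity of $\vbranch$ is essential: if $\vbranch$ were fractional, the right-hand sides of \eqref{eq:contingency:connected} and \eqref{eq:contingency:disconnected} would no longer be integral, the ``constancy along closed branches'' and ``zero across an all-open cutset'' arguments would both break, and fractional $\vpi^c$ could persist.
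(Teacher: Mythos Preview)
Your argument is correct and is precisely the expansion of what the paper means by ``Immediate from the proof of Theorem~\ref{thm:pi}'': propagate equality of $\vpi^{c}$ along closed branches via \eqref{eq:contingency:connected} to force $\vpi_i^{c}=1$ on the component of $r$, and for any other bus exhibit an all-open cutset in $\GC_i$ so that \eqref{eq:contingency:disconnected} forces $\vpi_i^{c}=0$. The only difference is presentational---you phrase it via constancy on connected components of the closed subgraph rather than recursion along a path---but the substance is identical.
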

    \begin{proof}
        Immediate from the proof of Theorem \ref{thm:pi}.
    \end{proof}

\subsection{MIP formulation of OTSD}
\label{sec:formulation:MILP}

    The complete MIP formulation of OTSD is presented below
    \begin{subequations}
    \label{eq:completeMIP}
    \begin{align}
        \min_{\vbranch, \va, \pf, \cf, \vpi, \boldsymbol{\sigma}} \quad &
            \sum_{c \in \contingencies} p^{c} \LL^{c}\\
            \text{s.t.} \quad
            & \eqref{eq:basecase:vbranch} - \eqref{eq:basecase:shadow_bigM}\\
            & \eqref{eq:contingency:demand} - \eqref{eq:contingency:pi}
    \end{align}
    \end{subequations}
    Although several constraints include non-linear terms like absolute values or bilinear products, these can be converted to MILP form using standard reformulations as follows.
    Absolute value terms of the form $|x| \, {\leq} \, y$, which appear in constraints \eqref{eq:basecase:thermal}, \eqref{eq:contingency:thermal} and \eqref{eq:contingency:connected},  are reformulated as $-y \, {\leq} \,  x \, {\leq} \, y$.
    Next, constraints \eqref{eq:basecase:ohm}, \eqref{eq:contingency:generation} and \eqref{eq:contingency:ohm} involve bilinear products of the form $y \, {=} \, x \, {\times} \, z$ where $z \, {\in} \, \{0, 1\}$ and $x \, {\in} \, [l, u]$ is continuous.
    Such bilinear products are reformulated in MILP form as
    \begin{subequations}
    \label{eq:bilinear:bigM}
    \begin{align}
        \label{eq:bilinear:bigM:1}
        lz \leq y & \leq u z, \\
        \label{eq:bilinear:bigM:2}
        (1-z) l \leq x - y &\leq (1-z) u.
    \end{align}
    \end{subequations}
    Note that the reformulation \eqref{eq:bilinear:bigM} remains valid for the bilinear product $\boldsymbol{\sigma}{\times}\vpi$ in \eqref{eq:contingency:generation}, because $\vpi$ is binary in any feasible solution per Theorem \ref{thm:pi_continu}.

    Formulation \eqref{eq:completeMIP} uses fewer variables than the formulation proposed in \cite{jeanson2025riskbasedapproachoptimaltransmission} by removing post-contingency virtual flows, but introduces an exponential number of constraints via \eqref{eq:contingency:disconnected}.
    Nevertheless, noting that constraints \eqref{eq:contingency:disconnected} can be separated efficiently, e.g., using a graph connectivity test, they can be added lazily in a callback.
    Initial experiments on small systems revealed that this callback is rarely needed, if at all, thus suggesting that constraints \eqref{eq:contingency:disconnected} may be redundant in practice.
    This empirical finding is validated by the analysis below.

    Consider a solution $(\vbranch, \va, \pf, \pf^{\star}, \vpi, \boldsymbol{\sigma})$ satisfying all constraints in \eqref{eq:completeMIP} except \eqref{eq:contingency:disconnected}, and let $c \,{\in} \, \contingencies$.
    First note that, if the grid remains connected following $c$, then constraints \eqref{eq:contingency:reference} and \eqref{eq:contingency:connected} ensure that $\vpi^{c}_{i} = 1, \forall i \in \buses$, and no constraint \eqref{eq:contingency:disconnected} is active for that contingency.
    Next, assume that $c$ leads to a loss of connectivity.
    Noting that i) $\vpi^{c}_{i} \, {=} \, 1, \forall i \, {\in} \, \buses^{c}$ using the above argument, and that ii) $\pf^{c}_{e} \, {=} \, 0, \forall e \, {\in} \, [\buses^{c}, \bar{\buses}^{c}]$, it follows that
    $\boldsymbol{\sigma}^{c} = \frac{\sum_{i \in \buses^{c}} \pdref_{i}}{\sum_{i \in \buses^{c}} \pgref_{i}}$ to ensure power balance in the energized area.
    In addition, power balance in the de-energized area reads
    \begin{align}
        \label{eq:contingency:balance}
        \sum_{i \in \bar{\buses}^{c}} \vpi^{c}_{i} (\boldsymbol{\sigma}^{c} \pgref_{i} - \pdref_{i}) &= 0.
    \end{align}
    Therefore, if the de-energized area contains only loads or only generators, then \eqref{eq:contingency:balance} can only hold if $\vpi^{c}_{i} \,{=} \, 0, \forall i \in \bar{\buses}^{c}$.
    If, on the other hand, the de-energized area contains both generators and loads, then a non-identically-zero $\vpi$ solution is possible only if a subset of generations and loads in the de-energized area are perfectly balanced, because \eqref{eq:contingency:connected} enforces the value of $\vpi^c_i$ to be equal for the buses within that (internally connected) area.
    This situation is extremely unlikely in practice, which explains why constraints \eqref{eq:contingency:disconnected} are redundant almost all the time. In practice, one can verify the validity of results after optimization, or implement a call-back. This analysis sheds some light on the fact that  makes formulation \eqref{eq:completeMIP} a more robust foundation for scaling to large-scale systems compared to the original OTSD formulation proposed in \cite{jeanson2025riskbasedapproachoptimaltransmission}.
    
\section{Solution Methodology}
\label{sec:heuristic}

    Preliminary experiments revealed that commercial MIP solvers struggle to find feasible solutions when presented with the extensive formulation \eqref{eq:completeMIP}.
    In an attempt to address these limitations, we implemented a column-and-constraint algorithm where \eqref{eq:completeMIP} is initialized with a (possibly empty) subset of contingencies, and additional contingencies are added iteratively, leading to the addition of relevant variables and constraints \eqref{eq:contingency:demand}-\eqref{eq:contingency:pi}.
    We also implemented a Benders decomposition scheme, made possible by leveraging the result of Theorem \ref{thm:pi_continu}.
    Nevertheless, neither of these strategies resulted in acceptable performance.
    Therefore, this section presents a heuristic algorithm to produce high-quality solutions in short computing times.
    We conjecture that this heuristic, combined with the proposed MILP formulation, provide a strong foundation for algorithmic approaches capable of solving OTSD exactly on sub-transmission grids with hundreds of buses.

\begin{figure}[!t]
    \centering
    \includegraphics[width=.9\linewidth]{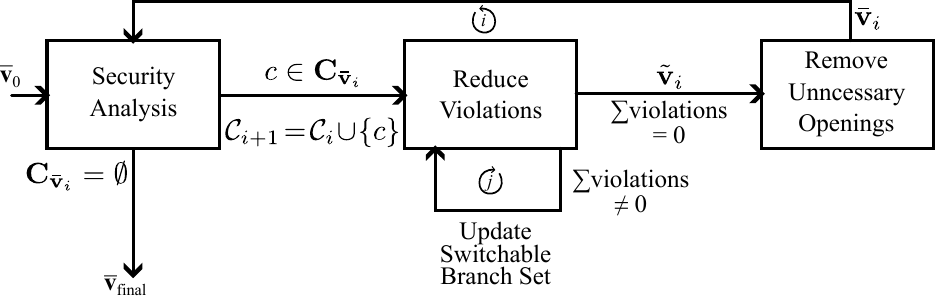}
    \caption{\textbf{Fast feasible solution finder heuristic.}
    The algorithm terminates when the \emph{security analysis} with configuration $\vsol_i$ finds no contingency causing violation.
    Otherwise, the main loop continues, adding the most constraining contingency to the subset $\contingencies_i$. 
    \emph{Reduce violations} step minimizes the violation; if zero is reached, $\vfsol_i$ is feasible, otherwise, the inner loop indexed by $j$ expands the set of switchable branches.
    Finally, \emph{unnecessary openings are removed}, yielding the resulting configuration $\vsol_i$.}
    \label{fig:algorithm}
    \vspace{-1em}
\end{figure}

\subsection{Skeleton of the heuristic}

    \begin{algorithm}
    \small
    \newcommand{\valg}{\textbf{v}}
    \newcommand{\newc}{c_{new}}
    \newcommand{\C}{\mathbf{C}}
    \newcommand{\toInc}{toExpand}
    \algdef{SE}[DOWHILE]{Repeat}{Until}{\algorithmicrepeat}[1]{\algorithmicuntil\ #1}
    \algtext*{EndRepeat} 
    
    \caption{Fast feasible solution finder}
    \label{alg:FastFeasibleSolutionFinder}
    \begin{algorithmic}[1]
        \State $\vsol \gets [1, \dots,1]$
        \State $\C \gets \textbf{SecurityAnalysis}(\vsol)$
        \For{$c \in \C$}
            \State $\MB[c] \gets \VBf{\vsol}{c}$
                \State $\nbhop[mb] \gets \nbhopO$ \textbf{for} {$mb \in \MB[c]$}
        \EndFor
        \While{True}
            \Repeat
                \For{$c \in \C, mb \in \MB[c]$}
                    \State $\PBO \gets \PBO \cup(\textbf{\hop}(mb, \nbhop[mb]))$
                \EndFor
                \State $(\vfsol, \crv) \gets \ReduceViol(\contingencies, \MB, \PBO)$
                \State $\toInc \gets \text{Set}()$
                \For{$c \in \crv$}
                    \State $\MB[c] \gets \MB[c] \cup \{vb\}$ \textbf{for} $vb \in \VBf{\vfsol}{c}$
                    \State $\toInc \gets \toInc \cup \{vb\}$ \textbf{for} $vb \in \MB[c]$
                \EndFor
                \For{$vb \in \toInc$}
                    \If{$vb \notin \text{Keys}(\nbhop)$}
                        \State $\nbhop[vb] \gets \nbhopO$
                    \Else
                        \If{$\nbhop[vb] = \nbhopmax$}
                            \State Return INFEASIBLE
                        \EndIf
                        \State $\nbhop[vb] \gets \nbhop[vb] +1$
                    \EndIf
                \EndFor
            \Until{$\crv \ne \emptyset$}
            \State $\vsol \gets \RemoveUO(\valg_{\FS}, \C)$
            \State $\C2\gets \textbf{SecurityAnalysis}(\vsol)$
            \If{$\C2 = \emptyset$}
                \State Return $\vsol$
            \EndIf
            \State $\newc \gets \MostConstraining(\C2)$
            \State $\C \gets \C \cup \{\newc\} $
            \State $\MB[c] \gets \VBf{\vsol}{c}$
            \For{$mb \in \MB[c]$}
                \If{$mb \notin \text{keys}(\nbhop)$}
                    \State $\nbhop[mb] \gets \nbhopO$
                \EndIf
            \EndFor
        \EndWhile
        \end{algorithmic}

    \end{algorithm}
    
    The algorithm is sketched in Figure \ref{fig:algorithm} and detailed in Algorithm \ref{alg:FastFeasibleSolutionFinder}.
    The goal is to find a feasible solution quickly by targeting key contingencies and limiting the branches considered for switching.
    
    Consider the function $\SA(\vsol) \rightarrow \contingenciesSa{\vsol}$ that runs over the whole $\contingencies$  a security analysis of the grid with the switching \emph{configuration} $\vsol$ and returns the set of \emph{contingencies} $\contingenciesSa{\vsol}$ that lead to violations of the thermal limits.

    Denote by $\vsol_0$ the initial configuration in which all branches are closed and let $\contingencies_1=\contingenciesSa{\vsol_0}$. At any point, $\VBf{\vsol}{c}$ denotes the set of branches subject to a violation under configuration $\vsol$ and contingency $c$.

    At  iteration $i$, the goal is to find a feasible solution $\vsol_i$ that satisfies -- without unnecessary disconnection  -- the constraints \eqref{eq:basecase:vbranch}-\eqref{eq:contingency:connected} on the subset $\contingencies_i$.
    If no solution is found, the problem is infeasible for $\contingencies_i$, a fortiori for $\contingencies$, and \emph{the program \eqref{eq:completeMIP} is also infeasible}.
    Conversely, if $\contingenciesSa{\vsol_i} = \emptyset$, $\vsol_i$ is a solution for all $\contingencies$ and \emph{$\vsol_i$ is a solution for \eqref{eq:completeMIP}}.

    Then, we iterate by picking up $c_{i+1} \in \contingenciesSa{\vsol_i}$
    which is returned by the function
    $
        \MostConstraining(\contingencies) 
        \rightarrow \text{argmax}_{c \in \contingenciesSa{\vsol_i}} 
        |\VBf{\vsol_i}{c}|
    $
    and define $\contingencies_{i+1}=\contingencies_i\cup\{c_{i+1}\}$.
    Note that as $\vsol_i$ is a solution on $\contingencies_i$, $\contingencies_i\cap\contingenciesSa{\vsol_i}=\emptyset$ and $c_{i+1}\notin\contingencies_i$.

    Therefore each iteration of the \emph{main loop} counts three steps detailed in the following subsections:
    perform a security analysis (\ref{sec:security_analysis});
    find a feasible solution $\vfsol_i$ (\ref{sec:findFeasibleSolution});
    simplify it to $\vsol_i$ by removing unnecessary openings (\ref{sec:removeUnnecessaryOpenings}).

\subsection{The security analysis}
\label{sec:security_analysis}

    \mt{
        Note that checking for N-1 feasibility is fairly fast: it can be done by rebalancing + DC power flow computation. Noting that, in real life, there is usually a small number of binding constraints, we first propose an iterative algorithm where we iteratively add contingencies to the MIP: solve MIP, security analysis, add contingency, solve MIP, security analysis, etc...\\
        An important thing to note is that, at each iteration, we are add constraints and variables (unlike, e.g., PTDF or LODF formulations that only add constraints), because we need to account for potential loss of connectivity under contingency\\
        Note that preliminary experiments show that the MIP becomes intractable after even 6 contingencies on 118 system...
        ... which motivates v2 that focuses on finding feasible solutions fast.}

    The formulation \eqref{eq:completeMIP} can be used as a security-analysis program consistent with the OTS model, identifying de-energized areas and rebalancing the system accordingly.
    It is obtained by omitting \eqref{eq:contingency:thermal} and fixing the binary variables $\vbranch$ to a given configuration.
    With no real degrees of freedom or infeasibility constraints, it solves pretty rapidly and establishes the system flows for both the ‘N’ and ‘N-1’ cases.
    Using this solver-based approach, a complete security analysis over $\contingencies$ was achieved in 12 s on the \textbf{118-IEEE} system.

    Performance was further improved replacing that solver-based implementation with a simplified procedure that reduces each contingency case to an independent linear system solvable within milliseconds.
    It first identifies bridges -- individual closed branches whose disconnection would split the grid into two islands -- corresponding to potential de-energization events.
    The main connected component is then defined and rebalanced, and the resulting power flows are computed using Power Transfer Distribution Factors (PTDF).
    This improvement reduced computation time to under 10 ms for the same system.

\subsection{Find a feasible solution $\vfsol_i$}
\label{sec:findFeasibleSolution}

    Examining the solver’s behavior when applied to the implementation of \eqref{eq:completeMIP} revealed a seemingly random exploration, prompting a reformulation of the problem.
    Instead of enforcing thermal limits, they were relaxed and penalized in the objective, allowing the solver to focus on decisions that effectively reduce violations, which in turn improves convergence speed.
    This is done using the program \eqref{eq:reduceViolations}. 
    \begin{subequations}
    \label{eq:reduceViolations}
    \begin{align}
        \min_{\vbranch, \va, \pf, \vpi, \sigma, \ol} \quad &\sum_{(c,e) \in \contingencies_i \times \edges} \ol^c_e\\
        \text{s.t.} \quad
        & \label{eq:TNR:thermalOverload}
        |\pf^{c}_{e}| \leq \pfmax_{e} + \ol^c_e 
            && \forall c, e \in \contingencies_i \times \edges\\
        & \label{eq:ThermalOverloadPositive}
            ol_c^e \ge 0
            && \forall c,e \in \contingencies_i \times \edges\\
        & \label{subeq:removeThermalLimitations}
            \eqref{eq:basecase:vbranch}{-}\eqref{eq:basecase:kcl},
            \eqref{eq:basecase:shadow_flow}{-}
            \eqref{eq:contingency:connected},
            \eqref{eq:contingency:pi}
            && \forall c \in \contingencies_i
    \end{align}
    \end{subequations}
    For branch $e$ and contingency $c$, a non-negative slack $ol_e^c$ relaxes the flow limit  \eqref{eq:TNR:thermalOverload} and is penalized in the objective.
    If the objective is zero, the solution is feasible for \eqref{eq:completeMIP} on $\contingencies_i$.

    Despite this relaxation, the problem remains challenging even on a restricted subset $\contingencies_i$.
    To limit the number of binary decision variables, the algorithm starts from a reduced instance that is gradually expanded, introducing degrees of freedom only where needed.
    The remaining violations then indicate where issues persist and around which these degrees of freedom should expand.
    Each solution serves as a warm start for the next iteration until a zero-objective solution is reached; this \emph{inner loop} is indexed by $j$.
    
    Let $\MB[c]$ denote the set of \emph{monitored branches} associated with contingency $c$, and $\MB_i^j[c]$ its state at iteration $ij$.
    For a given $c$, the monitored branches are those branches whose limits have been violated in any previous security analysis of the main loop, or in any previous call to \ReduceViol\ where $c$ was involved.
    Let $\MB_i^j=\bigcup_{c \in \contingencies_i}\MB_i^j[c]$ be the set of monitored branch for iteration $ij$.
    
    Denote $\PBO$ the \emph{Switchable Branch Set}, and $\PBO_i^j$ its state at iteration $ij$.
    $\PBO$ is the active set of degrees of freedom for this iteration.
    The approach relies on the fact that opening a branch primarily affects nearby flows; hence, the branches most likely to relieve a violation are those closest to it.
    In \eqref{eq:PBOrestriction}, each branch that is \emph{not in} $\PBO$ is forced closed except in its own tripping case.
    
    \begin{equation}\label{eq:PBOrestriction}
        \vbranch_e^c = 1 \quad \quad \quad \quad \forall c \in \contingencies_i,\ \forall e\notin \PBO\setminus{c}
    \end{equation}
    
    The function $\ReduceViol(\contingencies_{i}, \PBO)$, which returns $(\vfsol, \crv)$, corresponds to the program \eqref{eq:reduceViolations} with the additional constraints \eqref{eq:PBOrestriction}.
    The configuration $\vfsol$ minimizes flow violations, while $\crv\subset\contingencies_i$ denotes contingencies whose violations could not be reduced to $0$.

    \ReduceViol\ runs with $\PBO$ iteratively expanded around the monitored branches $\MB_i^j$ according to \eqref{eq:PBO}. 
    \begin{equation} \label{eq:PBO}
        \PBO_i^j = \bigcup_{e \in \MB_i^j} \hop(e, {\nbhop}_i^j[e])
    \end{equation}
    $\hop(e,l)$ is a function defined on $\edges\times\mathbb{N}$  that returns the set of all edges that are connected to the edge $e$ through a connected travel of length $l$. It follows that $\hop(e,0)=\{e\}$ and $\hop(e,l+1)\subseteq \hop(e,l)$ (Fig.\ref{fig:Hop}).

    \begin{figure}[!ht]
        \centering
        \includegraphics[width=.6\linewidth]{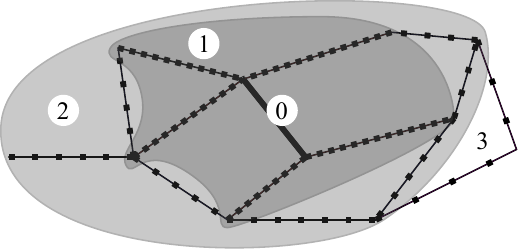}
        \caption{The $\hop(e,l)$ returns a set of branches composed of the branches separated from the origin branch $e$ by paths of length lower or equal to $l$.}
        \label{fig:Hop}
        \vspace*{-1em}
    \end{figure}


\begin{table*}[!t]
\centering
\caption{Computation Results}
\label{tab:results}
\resizebox{\textwidth}{!}{
\begin{tabular}{lrc@{\hskip 1.5em}cccccc@{\hskip 1.5em}cccccc@{\hskip 1.5em}cccr}
    \toprule
        &&
        & \multicolumn{6}{c}{\originAlg}
        & \multicolumn{6}{c}{\completeAlg}
        & \multicolumn{3}{c}{Heuristic}
        & 
        \\
    \cmidrule(r){4-9}
    \cmidrule(r){10-15}
    \cmidrule(r){16-19}
       Case
        & TLF 
        & SR
        & \textbf{$T_{1}$} & Obj$_{1}$ & $O_{1}$ 
        & \textbf{$T_{f}$} & Obj$_{f}$ & $O_{f}$ 
        & \textbf{$T_{1}$} & Obj$_{1}$ & $O_{1}$ 
        & \textbf{$T_{f}$} & Obj$_{f}$ & $O_{f}$ 
        & \textbf{$T_{1}$} & Obj$_{1}$ & $O_{1}$ 
        & Speedup
    \\
    \midrule
    14\_IEEE    & 100\% & 0         & -        & -         & -  & 371~ms & \bf{2.37} & 2      & -       & -         & 2  & 366~ms  & \bf{2.37} & 2     & 15~ms   & \bf{2.37} & \bf{1}  & 25   \\ 
    24\_IEEE    & 100\% & 0         & -        & -         & -  & 1.4~s  & \bf{1.66} & \bf{2} & -       & -         & -  & 6~s     & \bf{1.66} & 4      & 40~ms   & \bf{1.66} & \bf{2} & 35   \\ 
    30\_IEEE    & 120\% & .035      & 2~s      & 9.07      & 10 & 19~s   & \bf{6.82} & \bf{4} & 8~s     & 7.97      & 10 & 36~s    & \bf{6.82} & \bf{4} & 208~ms  & \bf{6.82} & \bf{4} & 10   \\ 
    30\_IEEE    & 100\% & .035      &  -       & -         & -  & 8.7~s  & inf       & -      & -       & -         & -  & 9.2~s   & inf       & -      & 1.9~s   &  inf      & -      & 5    \\ 
    57\_IEEE    & 200\% & \bf{.038} & -        & -         &    & 17~s   & \bf{.038} & \bf{0} & -       & -         & -  & 19~s    & \bf{.038} & 10     & 7~ms    & \bf{.038} & \bf{0} & 2,400 \\ 
    57\_IEEE    & 150\% & \bf{.038} & -        & -         &    & 45~s   & \bf{.038} & 2      &         &           &    & 40~s    & \bf{.038} & 8      & 91~ms   & \bf{.038} & \bf{1} & 495  \\ 
    57\_IEEE    & 120\% & .038      & 8~min    & 14.9      & 14 & 63~min & \bf{6.37} & 10     & 117~s   & 12.5      & 12 & $>$ 3~h & -         & -      & 138~ms  & 8.6       & 5      & 3,480 \\ 
    57\_IEEE    & 100\% & .038      & 150~s    & 27.4      & 18 & 69~min & \bf{7.33} & \bf{6} & 290~s   & 11.0      & 9  & $>$ 3~h & -         & -      & 533~ms  & 11.6      & 8      & 280  \\ 
    73\_RTS     & 100\% & 0         & $>$ 3~h  & -         &    & -      & -         &        & $>$ 3~h & -         &    & -       & -         &        & 66~s    & .83       & 5      & $\infty$\\ 
    118\_IEEE   & 150\% & 2.99      & $>$ 3~h  & -         &    & -      & -         &        & $>$ 3~h & -         &    & -       & -         &        & 80~s    & 7.9       & 10     & $\infty$\\ 
    118\_IEEE   & 125\% & 2.99      & $>$ 3~h  & -         &    & -      & -         &        & $>$ 3~h & -         &    & -       & -         &        & $>$ 3~h & -         &        & $\infty$\\ 
    200\_activ  & 100\% & \bf{17.4} & -        & -         & -  & 10~h   & \bf{17.4} & 13     & $>$ 3~h & -         &    & -       & -         &        & 49~ms   & \bf{17.4} & \bf{0} & 735,000\\ 
    200\_activ  &  60\% & 17.4      & -        & -         &    & -      & -         &        & -       & -         &    & -       & -         &        & 1~s     & 18.7      & 2      & - \\ 
    200\_activ  &  55\% & 17.4      & -        & -         &    & -      & -         &        & -       & -         &    & -       & -         &        & 140~ms  & bc\_inf   &        & - \\ 
    300\_IEEE   & 300\% & 51        & -        & -         &    & -      & -         &        & -       & -         &    & -       & -         &        & 23~s    & 78        & 4      & - \\ 
    300\_IEEE   & 200\% & 51        & -        & -         &    & -      & -         &        & -       & -         &    & -       & -         &        & $>$ 3~h & -         &        & - \\ 
    \bottomrule
\end{tabular}}
\begin{tablenotes}
\footnotesize
    \item 
    TLF: Thermal Limit Factor; 
    SR: Structural Risk (bold when known optimal);
    \originAlg\ is the algorithm from \cite{jeanson2025riskbasedapproachoptimaltransmission}.
    \completeAlg\ is the full implementation of the program of Section \ref{sec:problem}
    and Heuristic is the heuristic of Section \ref{sec:heuristic}.
    $T_x$, $Obj_x$, $O_x$ respectively corresponds to runtime, objective value, and number of openings of the solution.
    Index $x\in\{1,f\}$ refers to the first and final feasible solutions; the first is omitted when identical to the final.
    Optimal values are in bold (only \originAlg\ and \completeAlg\ qualifies optimality).
    When the solution is optimal, the smallest \textbf{$O_x$} found is in bold.
    inf: infeasible, bc\_inf: base case infeasible.
    Speedup is defined as the ratio of $T1$ from \originAlg\ to that of the heuristic.
\end{tablenotes}
\vspace{-1.5em}
\end{table*}

    For each branch $e \in \MB$, let $\nbhop[e]$ denote the \emph{number of hops} to be considered, and ${\nbhop}_i^j[e]$ its value in iteration $ij$.
    When the flow in branch $e$ violates its limit in \SA\ or \ReduceViol\ for the first time, it is added to $\MB$ and $\nbhop[e]=\nbhopO$.
    If $e$ is already in $\MB$, $\nbhop[e]$ is incremented whenever there exists a contingency $c\in\contingencies_i$ such that $e\in\MB_i^j[c]$ and $c$ still produces residual violations in \ReduceViol, even if $e$ itself is not violated.
    Consequently, the degrees of freedom remain localized around the monitored branches and expand only around those newly violated or associated with contingencies that continue to cause issues.
    $\PBO_i^j$ is updated until \ReduceViol\ returns a solution $(\vfsol_i^j, \emptyset)$, implying that configuration $\vfsol_i=\vfsol_i^j$ satisfies all thermal limits for the contingencies in $\allcases_i$.  
    The case is deemed infeasible whenever resolving remaining violations would require increasing some $\nbhop[e]$ beyond $\nbhopmax$.

\subsection{Remove unnecessary openings}
\label{sec:removeUnnecessaryOpenings}
    Multiple feasible solutions may satisfy the constraints of the previous problem, and some -- including $\vfsol_i$ -- may contain unnecessary branch openings.
    Since opening branches generally weakens the grid, such redundant openings can trigger additional violations when evaluating new contingencies beyond $\contingencies_i$.
    Therefore, before performing the security analysis, these superfluous openings shall be removed.
    The program \eqref{eq:Simplify} fulfills this role by generating a simplified solution that meets the same constraints as $\vfsol_i$ while minimizing the number of open branches.
    \begin{subequations}
        \label{eq:Simplify}
        \begin{align}
            \min_{\vbranch, \va, \pf, \vpi, \sigma} \quad & \sum_{e \in \edges} 1 - \vbranch^\emptyset_e\\
            \text{s.t.} \quad
        & \eqref{eq:basecase:vbranch}-            \eqref{eq:contingency:connected},
            \eqref{eq:contingency:pi}
        &&\forall c \in \contingencies_i \\
        & \vbranch_e^c \ge \vfsol_e^c &&\forall c \in \contingencies_i,\ \forall e\in \edges
        \end{align}
    \end{subequations}

    $\RemoveUO$ applies this program to $(\vfsol_i, \contingencies_i)$ and returns $\vsol_i$.
    It runs efficiently, as $\vfsol_i$ contains only a few openings.

\section{Results and discussion}
\label{sec:results}

We evaluated three algorithms on various grids: the \emph{\originAlg} algorithm from \cite{jeanson2025riskbasedapproachoptimaltransmission}, the \emph{\completeAlg} algorithm -- a direct implementation of Section~\ref{sec:problem} using lazy constraints for \eqref{eq:contingency:disconnected} -- and the proposed \emph{Heuristic} from Section~\ref{sec:heuristic}. 
All experiments were run in \emph{Julia} with \emph{Gurobi~12} on a MacBook Air M3. 
Network data and thermal limits were taken from \emph{PGLib}. The contingency set comprises all branches, with equal probability assigned to each.
To avoid trivial or infeasible cases, thermal limits were scaled by a \emph{Thermal Limit Factor} (TLF), which under the DC approximation yields constraints equivalent to scaling injections inversely to the TLF. 
For some networks, Multiple TLF values were tested to assess the effect of network stress.
The \emph{structural risk} (SR) corresponds to the objective value when all branches are closed and thermal limits are not enforced.
It provides a lower bound for the problem: when nonzero, it represents the amount of load that cannot be secured because it depends on a single branch.

Table~\ref{tab:results} summarizes the results. 
Because our heuristic only focuses on finding a feasible solution, we also report the times and objective values at which the \originAlg\ and \completeAlg\ algorithms first reach  feasibility; their final solutions are optimal.
When the computational time exceeded 3 hours without producing a result, a generic ``$> 3~h$'' was recorded, with the exception being the case \textbf{200\_activ} for which a solution was found beyond this time and the time was retained.
A ``–'' indicates simulations not attempted, or in the first-result columns, cases where the first and final results coincide.

The \originAlg\ and \completeAlg\ algorithms perform similarly, with a slight advantage for \originAlg, while our heuristic consistently outperforms both in reaching feasibility.
In the cases studied where the optimal value is known, we observe that it is often attained by the heuristic.
For the largest cases, which were intractable for the \originAlg\ and \completeAlg\ algorithms, the heuristic was able to find solutions in a limited time.
This suggests that limiting the search to the neighborhood of congested areas is sufficient to capture near-optimal, and frequently optimal, solutions. 
In several instances, the first feasible solution from the exact methods was already close to optimal, confirming the relevance of feasibility-oriented heuristics.

When no line opening is needed, the solution is trivially optimal (case \textbf{57\_IEEE TLF~200\%} and \textbf{200\_active TLF~100\%}). The heuristic reaches this solution almost instantly as it only requires a security analysis.
In contrast, the other algorithms fail to recognize the triviality of the case and take longer to conclude.
Conversely, when no feasible solution exists, the heuristic runtime becomes comparable to that of the other algorithms, since proving infeasibility would require an exploration of the entire search space, something achievable only by setting $\nbhopmax$ up to the diameter of the line graph.
However, keeping the solution within the vicinity of the constraints makes sense for the operator, which justifies a limited $\nbhopmax$.
Indeed, when a solution requires widespread topology changes beyond the vicinity of the violated branch, it likely reflects a fragile operating state, close to security limits, and highly vulnerable to disturbances.
In such cases, the operator would typically avoid relying on such configuration and instead opt for more costly but secure preventive actions.

For case \textbf{57\_IEEE TLF~150\%}, the system is constrained and requires branch openings to withstand contingencies. The resulting risk matches the structural risk, indicating that the OTSD’s de-energization feature remains unused and the outcome would coincide with a classical OTS.
By contrast, most cases show that the OTSD enables efficient solutions that would otherwise require costly actions, thereby extending the feasible operating domain at the cost of a controlled increase in risk exposure.

In summary, the heuristic delivers high-quality solutions at a fraction of computational cost. 
It effectively captures the structure of optimal solutions without the overhead of formal optimality proofs, demonstrating a practical trade-off between speed and exactness.

\section{Conclusion}
\label{sec:conclusion}

This paper considers Optimal Transmission Switching with De-energization, a challenging problem that reflects real-life operating practices yet has received little attention in the literature.
The paper improves the state of the art \cite{jeanson2025riskbasedapproachoptimaltransmission} by i) proposing a new MILP formulation that does not require binary variables for representing post-contingency loss of connectivity, and ii) developing an iterative heuristic algorithm to find feasible solutions in reasonable computing times.
These advances mark an additional step towards solving OTSD exactly on sub-transmission grids.

Computational experiments on transmission grids with up to 300 buses confirm the difficulty of solving OTSD exactly using existing MIP technology.
In particular, commercial MIP solvers like Gurobi struggle to find feasible solutions (even trivial ones).
Nevertheless, numerical results demonstrate the efficiency of the proposed heuristic which, in most cases, yields high-quality feasible solutions within seconds, a ${>}1000\times$ speedup compared to solving the extensive MIP with Gurobi.
Future work will investigate further performance and scalability improvements for larger grids, as well as the incorporation of bus splitting actions in the problem formulation.

\section*{Acknowledgements}
Benoît Jeanson and Simon Tindemans were supported by the CRESYM OptGrid project funded by RTE – Réseau de Transport d’Électricité, and Mathieu Tanneau by Los Alamos National Laboratory under Award No. C4995.
\bibliographystyle{IEEEtran}

\bibliography{refs}
\end{document}